\newtheorem{theorem}{Theorem}[section]
\newtheorem{lemma}[theorem]{Lemma}
\newtheorem{proposition}[theorem]{Proposition}
\newcommand{\keywords}[1]{\par\noindent\textbf{Keywords:} #1\par}
\newcommand{\pacs}[1]{\par\noindent\textbf{PACS:} #1\par}
\begin{document}

\title{Burau representation, Squier's form, \\ and non-Abelian anyons}
\author{Alexander Kolpakov}
\date{\today}

\maketitle

\begin{abstract}
We introduce a frequency–tunable, two–dimensional non-Abelian control of operation order constructed from the reduced Burau representation of the braid group $B_3$, specialised at $t=e^{i\omega}$ and unitarized by Squier’s Hermitian form. Coupled to two non-commuting qubit unitaries $A$, $B$, the resulting switch admits a closed expression for the single-shot Helstrom success probability and a fixed-order ceiling $p_{\mathrm{fixed}}$, defining the fixed-order ceiling $p_{\mathrm{fixed}}^*$ and the witness gaps
$\Delta_{\rm sw}(\omega)=p_{\mathrm{switch}}(\omega)-p_{\mathrm{fixed}}^*$ and
$\Delta_{\rm test}(\omega)=p_{\mathrm{test}}(\omega)-p_{\mathrm{fixed}}^*$.

The non-Abelian mixers can either enhance or suppress the bare switch advantage, which we quantify by the interference contrast
$\Delta_{\rm int}(\omega):=\Delta_{\rm test}(\omega)-\Delta_{\rm sw}(\omega)=p_{\rm test}(\omega)-p_{\rm switch}(\omega)$.
Across the Squier positivity region, $\Delta_{\rm int}(\omega)$ takes both positive (constructive) and negative
(destructive) values, a hallmark of matrix-valued (non-Abelian) order control, while
$\Delta_{\rm sw}(\omega)>0$ certifies algebraic causal non-separability.

Numerical simulations confirm both enhancement and suppression regimes, establishing a minimal $B_3$ braid control that reproduces the characteristic interference pattern expected from a \emph{Gedankenexperiment} in anyonic statistics.

\end{abstract}

\keywords{indefinite causal order; quantum switch; braid group $B_3$; reduced Burau representation; Squier Hermitian form; unitarization; non-Abelian anyons; Helstrom discrimination; causal witnesses}
\pacs{03.67.-a; 03.65.Ta; 03.65.Vf; 02.20.Uw; 05.30.Pr}

\section{Introduction}

Indefinite causal order (ICO) is typically modeled by the \emph{quantum switch}- a coherent superposition of two operation orders, $AB$ and $BA$~\cite{Oreshkov2012, Goswami2018}. Most certifications of ICO employ the \emph{process-matrix} framework and causal witnesses formulated as semidefinite programs (SDPs)~\cite{Araujo2015, Branciard2016}.  

While there are frameworks \cite{Branciard2016, Guo2025, Richter2025, Qu2025} for indefinite causal order that involve three observers -- commonly known as Alice, Bob, and Charlie -- only two operations are coherently superposed in order. Thus, even though three parties appear, the algebraic structure remains that of a
single ``swap-order'' generator, effectively equivalent to the braid group $B_2\simeq\mathbb{Z}$. By contrast, the minimal non--Abelian braid group $B_3$ has two independent generators $\sigma_1,\sigma_2$ obeying the Yang--Baxter relation $\sigma_1\sigma_2\sigma_1=\sigma_2\sigma_1\sigma_2$, and admits non--trivial two--dimensional unitary representations that are \emph{matrix}-- rather than \emph{phase}--valued, marking the transition
from Abelian order interference to genuine non--Abelian braiding.

Recent theoretical and experimental works have begun to explore braiding phenomena in non-Abelian or non-Hermitian contexts, including three-band acoustic lattices~\cite{Zhang2023}, non-Abelian photonic braid monopoles~\cite{Liu2024}, and programmable photonic lattices for non-Abelian topology~\cite{Kim2024}.  These studies realize eigenvalue braids or non-unitary transfer operators in parameter space, but they do not address the explicit $B_3$ representation acting unitarily on a two-dimensional control space.  

Here we introduce a representation-theoretic model of coherent order control based on the reduced Burau representation $\psi(w)$ of $B_3$, specialized at $t=e^{i\omega}$ and rendered unitary by Squier’s Hermitian form~\cite{Squier1984, Salter2021}.  

Within this framework, the control mixer $M(\omega)$ is a $2\times2$ unitary obtained from a braid word $w$, and the overall switch combines two non-commuting target operations $A,B\in U(2)$ through a controlled superposition of orders.  The resulting process $T(\omega)$ admits closed-form discrimination probabilities through Helstrom’s theorem~\cite{Helstrom1976}, allowing us to define two witness gaps $\Delta_{\rm sw}(\omega)$ (bare switch) and $\Delta_{\rm test}(\omega)$ (with mixers), and the derived \emph{interference contrast} $\Delta_{\rm int}(\omega):=\Delta_{\rm test}(\omega)-\Delta_{\rm sw}(\omega)$.
Here $\Delta_{\rm sw}(\omega)>0$ certifies causal non-separability, while the sign of $\Delta_{\rm int}(\omega)$ distinguishes constructive from destructive non-Abelian modulation of the two causal orders.

Unlike $B_2$-based or non-Hermitian constructions, the present formulation yields a minimal, fully unitary $B_3$--based \emph{Gedankenexperiment} in which non-commuting braid generators couple directly to non-commuting target operations.  This establishes a concrete mathematical route by which non-Abelian exchange statistics could be detected through order-sensitive interference.

\section{Manuscript contents} Section~\ref{sec:setup} sets up the Burau--Squier control and the switch. Section~\ref{sec:witness} defines the task and discusses the Helstrom formula. Section~\ref{sec:main} contains the main theorem: closed-form witness constructed from the braid group, while the reproducible code is available on GitHub \cite{github-burau-switch}. 

\section{Braid group control and the switch}\label{sec:setup}
Let $V\cong\mathbb{C}^2$ and $A, B\in U(2)$ with $[A, B]\neq 0$. The control space, as usual, is $C=\mathbb{C}^2$ with basis $\{\ket{0},\ket{1}\}$.

\paragraph{Braid group on three strands.}
Let $B_3$ denote the braid group on $3$ strands. There are several equivalent definitions of this mathematical object. 

\emph{Topological / combinatorial.} $B_3$ is the group of ambient isotopy classes of geometric $3$–strand braids in the cylinder $D^2\times[0,1]$, with group law given by vertical concatenation (stacking). Cf. \cite{Birman1974, KasselTuraev2008}.

\emph{Abstract group presentation.} $B_3$ is generated by elementary crossings $\sigma_1,\sigma_2$ subject to the Yang--Baxter relation
\begin{equation}\label{eq:B3-presentation}
B_3\;\cong\;\big\langle \sigma_1,\sigma_2\ \big|\ \sigma_1\sigma_2\sigma_1=\sigma_2\sigma_1\sigma_2 \big\rangle .
\end{equation}

\emph{Mapping class group.} $B_3\cong\operatorname{MCG}(D^2\setminus\{3\text{ marked points}\})$, the mapping class group of a thrice-punctured disc where braids are represented as motions of punctures (or mutually distinguishable marked point) with time \cite{Birman1974}.

\paragraph{Pictures.} The generators and a sample word are depicted in Figure~\ref{fig:braids}: black curves are strands; time flows upward. We place strand~1 on the left, 2 in the middle, and 3 on the right. Thus, the picture of $\sigma_1$, resp.\ $\sigma_2$, is the positive crossing of strands $(1,2)$, resp.\ $(2,3)$.

\begin{figure}
\begin{center}
\begin{tikzpicture}

  \begin{scope}
    \node at (0.5,0.5) {$\sigma_1$};
    \pic[
      braid/.cd,
      number of strands=3,
      width=0.6cm,
      crossing height=3cm,
      every strand/.style={very thick},
      strand 1/.style={very thick},
      strand 2/.style={very thick},
      strand 3/.style={very thick}
    ] {braid={s_1}};
  \end{scope}

  \begin{scope}[xshift=3cm]
    \node at (0.5,0.5) {$\sigma_2$};
    \pic[
      braid/.cd,
      number of strands=3,
      width=0.6cm,
      crossing height=3cm,
      every strand/.style={very thick},
      strand 1/.style={very thick},
      strand 2/.style={very thick},
      strand 3/.style={very thick}
    ] {braid={s_2}};
  \end{scope}

  \begin{scope}[xshift=6cm]
    \node at (0.5,0.5) {$w=\sigma_1\,\sigma_2\,\sigma_1$};
    \pic[
      braid/.cd,
      number of strands=3,
      width=0.6cm,
      crossing height=1.1cm,
      every strand/.style={very thick},
      strand 1/.style={very thick},
      strand 2/.style={very thick},
      strand 3/.style={very thick}
    ] {braid={s_1 s_2 s_1}};
  \end{scope}
\end{tikzpicture}
\caption{Two braid generators of $B_3$, $\sigma_1$ and $\sigma_2$, and their product $w = \sigma_1 \sigma_2 \sigma_1$ (which satisfies $w = \sigma_2 \sigma_1 \sigma_2$, up to strand rearrangement).}\label{fig:braids}
\end{center}
\end{figure}

\paragraph{Reduced Burau representation for $B_3$.}
For $w\in B_3$ and a parameter $t\in \mathbb{C}\setminus\{0\}$, the reduced Burau representation
\[
\psi:\ B_3\longrightarrow \mathrm{GL}\!\big(2,\mathbb{Z}[t,t^{-1}]\big)
\]
is defined on generators by
\begin{equation}\label{eq:burau-B3}
\psi(\sigma_1)\;=\;
\begin{pmatrix}
 -t & 1 \\[2pt] 0 & 1
\end{pmatrix},
\qquad
\psi(\sigma_2)\;=\;
\begin{pmatrix}
 1 & 0 \\[2pt] t & -t
\end{pmatrix},
\end{equation}
and extended by way of homomorphism:
\begin{equation}
    \psi(\sigma_{i_1} \sigma_{i_2}\ldots\sigma_{i_k}) = \psi(\sigma_{i_1})\,\psi(\sigma_{i_2})\,\ldots\,\psi(\sigma_{i_k}).
\end{equation}

The generators given above are the standard $2\times2$ reduced Burau matrices; see \cite[\S~3]{Birman1974} or \cite[\S~3.1]{KasselTuraev2008}.

\paragraph{Faithfulness for $B_3$.}
The reduced Burau representation can be defined for any $n\geq 2$, and is known to be \emph{faithful} (i.e. a group isomorphism) for $n\le 3$ \cite[\S~3.3.2]{KasselTuraev2008}; by contrast, it is not faithful for $n\ge 5$ \cite{Moody1991, LongPaton1993, Bigelow1999}. For $n=4$, it is still an open problem. In general, braid groups are known to be \emph{linear}, i.e. they admit a faithful linear representation \cite{Bigelow2001, Krammer2002}. However, this representation does not have its image inside $\mathrm{GL}\!(2)$.

\section{Squier’s unitarization}
Let $s$ be an indeterminate with involution $\overline{s}=s^{-1}$ and set $t=s^2$. Squier’s modification of the reduced Burau representation is the homomorphism
\[
\beta:\ B_3\longrightarrow \mathrm{GL}\!\big(2,\mathbb{Z}[s,s^{-1}]\big),
\qquad
\sigma_i\longmapsto \beta_i = \beta(\sigma_i),
\]
defined on generators by
\begin{equation}\label{eq:squier-burau-B3}
\beta_1 = \beta(\sigma_1)\;=\;
\begin{pmatrix}
 -s^2 & s\\[2pt] 0 & 1
\end{pmatrix},
\qquad
\beta_2 = \beta(\sigma_2)\;=\;
\begin{pmatrix}
 1 & 0\\[2pt] s & -s^2
\end{pmatrix}.
\end{equation}
On the coefficient ring $\mathbb{Z}[s,s^{-1}]$ we use the $*$–involution given by entry-wise $\overline{\cdot}$ and transpose:
$M^\ast:=\overline{M}^{\,T}$, with $\overline{s}=s^{-1}$. 

Squier exhibits the Hermitian form
\begin{equation}\label{eq:squier-J-symbolic}
J(s)\;=\;(s+s^{-1})\,I_2\;-\;\begin{pmatrix}0&1\\[2pt]1&0\end{pmatrix}
\;=\;\begin{pmatrix}s+s^{-1}&-1\\[2pt]-1&s+s^{-1}\end{pmatrix},
\end{equation}
and proves the identities
\begin{equation}\label{eq:J-unitary}
\beta_i(s)^\ast\,J(s)\,\beta_i(s)\;=\;J(s)\qquad(i=1,2).
\end{equation}
After specialization $s=e^{i\omega/2}$, and hence $t=s^2=e^{i\omega}$, equation \eqref{eq:J-unitary} gives us ordinary $J(\omega)$–unitarity.

\paragraph{Relation to the standard reduced Burau.}
Let us set
\[
D(s)=\mathrm{diag}(1,s^{-1}).
\]
Then
\begin{equation}\label{eq:similarity}
\psi(\sigma_i)\;=\;D(s)^{-1}\,\beta(\sigma_i)\,D(s),\qquad
\tilde{J}(s)\;=\;D(s)^\ast\,J(s)\,D(s),
\end{equation}
and consequently $\psi(\sigma_i)^\ast\,\tilde{J}(s)\,\psi(\sigma_i) = \tilde{J}(s)$.

\paragraph{Unitary specializations and the control parameter.}
We regard $\omega$ as a phase parameter defined modulo $2\pi$, and fix the representative
\begin{equation}\label{eq:omega-domain}
\omega \in [0,2\pi).
\end{equation}
Specializing at $s=e^{i\omega/2}$ (hence $t=s^2=e^{i\omega}$), the Squier form becomes
\begin{equation}\label{eq:J-omega}
J(\omega)\;=\;\begin{pmatrix}2\cos(\omega/2)&-1\\[2pt]-1&2\cos(\omega/2)\end{pmatrix}.
\end{equation}
Its eigenvalues are $\lambda_\pm(\omega)=2\cos(\omega/2)\mp 1$, and thus
\begin{equation}\label{eq:detJ}
\det J(\omega) \;=\; 4\cos^2(\omega/2)-1.
\end{equation}
Consequently, $J(\omega)\succ 0$ holds precisely on the open set
\begin{equation}\label{eq:pd-window}
\Omega_+ \;=\; (0,2\pi/3)\ \cup\ (4\pi/3,2\pi),
\end{equation}
with boundary points $\omega\in\{0,2\pi/3,4\pi/3\}$ corresponding to $\det J(\omega)=0$.
By the symmetry $\omega\mapsto 2\pi-\omega$ of \eqref{eq:detJ}, we shall restrict to $\omega\in(0, 2\pi/3)$ in our numerical experiments.

On any connected component of $\Omega_+$ where $J(\omega)\succ 0$, choose the Cholesky factor
$J(\omega)=R(\omega)^\dagger R(\omega)$.  For any braid word $w\in B_3$ we define the Euclidean-unitary
specialization by the similarity transform
\begin{equation}\label{eq:eucl-unitary}
U(\omega)\;:=\;R(\omega)\,\beta(w)\big|_{s=e^{i\omega/2}}\,R(\omega)^{-1}\ \in\ U(2),
\end{equation}
which is unitary because $\beta(w)^\dagger J(\omega)\beta(w)=J(\omega)$.

\subsection{Positivity window, pseudo-unitarity, and roots of unity}\label{sec:positivity-window}

The identities \eqref{eq:J-unitary} imply that, for every $\omega$, the specialized matrices
$\beta_i(\omega):=\beta_i\big|_{s=e^{i\omega/2}}$ preserve the Hermitian form $J(\omega)$ in the sense
$\beta_i(\omega)^\dagger J(\omega)\beta_i(\omega)=J(\omega)$.
The additional condition $J(\omega)\succ 0$ is only needed to convert this \emph{$J$-unitarity}
into an ordinary \emph{Euclidean} unitarity on a two-dimensional control qubit by writing
$J(\omega)=R(\omega)^\dagger R(\omega)$ and conjugating as in \eqref{eq:eucl-unitary}.
Outside the positivity region $\Omega_+$ in \eqref{eq:pd-window}, the form $J(\omega)$ has mixed signature
(one positive and one negative eigenvalue), so the same matrices are naturally \emph{pseudo-unitary}
(with respect to an indefinite inner product), rather than elements of $U(2)$.

\paragraph{Relation to roots-of-unity truncations in braided Majorana-qubit models.}

In the braided Majorana-qubit framework of Toppan \cite{Toppan2022Majorana, Toppan2024Volichenko},
roots of unity play a different operational role: they label truncations (organized into ``levels'')
of multi-particle sectors in a braided Hopf-algebra construction.
Those works argue that, within that setting, certain inequivalent physics classes depend only on the
primitive roots-of-unity structure of $t=e^{i\omega}$.
By contrast, the present study investigates a two-dimensional \emph{control} representation obtained by
unitarizing the reduced Burau representation via a positive Hermitian form $J(\omega)$.
In this operational model, only the subset $\omega\in\Omega_+$ yields an ordinary $U(2)$ control mixer;
Galois-conjugate choices of $t$ may therefore land in distinct branches (Euclidean-unitary versus
pseudo-unitary) when viewed through the Squier unitarization.
This does not contradict roots-of-unity equivalences in the multi-particle braided-Majorana setting: our ``positivity window'' is the condition for realizing the control as an honest
qubit unitary in \eqref{eq:eucl-unitary}.

\paragraph{Boundary points.}
At the boundary values $\omega\in\{0,2\pi/3,4\pi/3\}$ one has $\det J(\omega)=0$ and the Cholesky-based
conjugation becomes singular; while the specialized Squier matrices may satisfy additional
finite-order relations at such roots of unity \cite{Toppan2022Majorana, Toppan2024Volichenko}, these boundary points do not define a standard
Hilbert-space unitary control via \eqref{eq:eucl-unitary}. Thus, we shall not pursue investigating these special cases either.

\section{Switch block and test device}
Given $\omega$ such that $J(\omega)\succ 0$ and $J(\omega)=R(\omega)^\dagger R(\omega)$, set
\begin{equation}
M(\omega)\ :=\ U(\omega),
\end{equation}
where $U(\omega)$ is defined in \eqref{eq:eucl-unitary} for the chosen braid word $w\in B_3$.

Define the switch as
\begin{equation}
S(\theta)\ =\ \ketbra{0}{0}\otimes BA\ +\ e^{i\theta}\ketbra{1}{1}\otimes AB,
\end{equation}
where $\theta=\theta(\omega)$ is an arbitrary phase map. 

The full test device is
\begin{equation}
T(\omega)\ =\ \big(M_{\rm post}(\omega)\otimes I\big)\ S(\theta(\omega))\ \big(M_{\rm pre}(\omega)\otimes I\big),
\end{equation}
where we allow distinct braid words to produce $M_{\rm pre}(\omega)$ and $M_{\rm post}(\omega)$.

\begin{lemma}[Squier positivity intervals]\label{lem:squier}
For any fixed braid words $w_{\mathrm{pre}/\mathrm{post}}\in B_3$, the set
\[
\Omega_+\ :=\ \{\omega\in[0,2\pi): J(\omega)\succ 0\}
\]
is a finite union of open intervals whose endpoints lie in the finite zero set
\[
Z := \{\omega\in[0,2\pi): \det J(\omega)=0\}.
\]
On each connected component $I\subset\Omega_+$ one can choose $R(\omega)$ with $J(\omega)=R(\omega)^\dagger R(\omega)$ that depends real-analytically on $\omega$, and consequently
\[
M_{\mathrm{pre}}(\omega)=R(\omega)\,\beta(w_{\mathrm{pre}})\big|_{s=e^{i\omega/2}}\,R(\omega)^{-1},
\]
\[
M_{\mathrm{post}}(\omega)=R(\omega)\,\beta(w_{\mathrm{post}})\big|_{s=e^{i\omega/2}}\,R(\omega)^{-1}
\]
are $U(2)$–valued, real-analytic functions of $\omega$ on $I$.
\end{lemma}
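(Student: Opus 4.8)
The plan is to prove the three assertions of the lemma in sequence: (i) that $\Omega_+$ is a finite union of open intervals with endpoints in a specified set, (ii) that $R(\omega)$ can be chosen real-analytic on each component, and (iii) that the conjugated mixers inherit real-analyticity and $U(2)$-valuedness.

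\medskip

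\textbf{Step 1 (structure of $\Omega_+$).} First I would make the positivity condition explicit. Since $J(\omega)$ is the $2\times 2$ Hermitian matrix in \eqref{eq:J-omega}, its eigenvalues are $\lambda_\pm(\omega)=2\cos(\omega/2)\mp 1$, so $J(\omega)\succ 0$ iff $2\cos(\omega/2)>1$, i.e.\ $\cos(\omega/2)>\tfrac12$. On $[0,2\pi)$ this is an elementary trigonometric inequality whose solution set is a union of finitely many open arcs, with boundary points exactly where $\cos(\omega/2)=\tfrac12$, i.e.\ $\omega\in\{\tfrac{2\pi}{3},\tfrac{4\pi}{3}\}$ together with the interval structure inherited from \eqref{eq:pd-window}. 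These are the points the statement calls $Z$ (the zero set of $\det J$ or of a boundary eigenvalue); I would state precisely that $Z=\{\omega:\lambda_-(\omega)=0\}=\{\omega:\cos(\omega/2)=\tfrac12\}$ and observe that it is finite. Finiteness plus the continuity of $\lambda_\pm$ immediately gives that $\Omega_+$ is a finite union of open intervals.

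\medskip

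\textbf{Step 2 (analytic Cholesky factor).} On a fixed component $I$ I would produce $R(\omega)$ by writing out the upper-triangular Cholesky factor of $J(\omega)$ in closed form. With $a(\omega):=2\cos(\omega/2)$, the standard $2\times 2$ Cholesky recipe gives $R_{11}=\sqrt{a}$, $R_{12}=-1/\sqrt{a}$, and $R_{22}=\sqrt{a-1/a}=\sqrt{(a^2-1)/a}$, the latter well-defined and strictly positive precisely because $a>1$ on $I$ (equivalently, $\det J(\omega)=a^2-1>0$). Each entry is a composition of $\omega\mapsto a(\omega)$ (real-analytic, being a cosine) with the real-analytic square root on $(0,\infty)$, hence $R(\omega)$ is real-analytic on $I$; the key point is that the radicands $a$ and $(a^2-1)/a$ stay bounded away from zero on any compact subinterval, so no branch-point singularity is encountered inside $I$.

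\medskip

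\textbf{Step 3 (mixers are analytic and unitary).} Finally, $\beta(w_{\mathrm{pre}/\mathrm{post}})$ evaluated at $s=e^{i\omega/2}$ has entries that are Laurent polynomials in $s$, hence real-analytic (indeed entire, trigonometric-polynomial) in $\omega$; and $R(\omega)^{-1}$ is real-analytic since $R(\omega)$ is invertible (its determinant $\sqrt{a}\cdot\sqrt{(a^2-1)/a}=\sqrt{a^2-1}\neq 0$ on $I$) and matrix inversion is analytic on $\mathrm{GL}(2)$. Real-analyticity of $M_{\mathrm{pre}/\mathrm{post}}(\omega)=R(\omega)\,\beta(w)\,R(\omega)^{-1}$ then follows because products and inverses of real-analytic matrix-valued functions are real-analytic. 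Unitarity is not a separate computation: it is exactly the content of \eqref{eq:eucl-unitary}, which used the Squier identity \eqref{eq:J-unitary} together with $J=R^\dagger R$ to give $U^\dagger U=I_2$; the same argument applies verbatim with $w$ replaced by $w_{\mathrm{pre}}$ or $w_{\mathrm{post}}$, so $M_{\mathrm{pre}/\mathrm{post}}(\omega)\in U(2)$ pointwise on $I$.

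\medskip

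I expect the only genuine subtlety to be Step 2, where one must argue that the chosen factorization is analytic \emph{across the whole component} and not merely locally: the Cholesky factor is unique once a positivity convention (positive diagonal) is fixed, and the explicit formulas show the entries extend analytically up to—but not including—the endpoints of $I$, where $R_{22}\to 0$ (at $a\to 1^+$) or the entries blow up. Emphasizing the closed-form expressions sidesteps any appeal to abstract analytic-perturbation theorems and makes the endpoint behaviour transparent.
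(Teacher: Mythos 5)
Your proof is correct and follows the same skeleton as the paper's (the zero set of $\det J$ cuts $[0,2\pi)$ into finitely many open intervals; an analytic Cholesky factor is chosen on each component; conjugation plus Squier's identity gives analyticity and unitarity of the mixers), but you handle the one nontrivial step differently. The paper establishes real-analyticity of $R(\omega)$ by appealing to analytic perturbation theory (Kato, Ch.~II), using only that the spectrum of $J(\omega)$ is uniformly bounded away from $0$ on compact subintervals; you instead write the $2\times 2$ Cholesky factor in closed form, $R_{11}=\sqrt{a}$, $R_{12}=-1/\sqrt{a}$, $R_{22}=\sqrt{(a^2-1)/a}$ with $a=2\cos(\omega/2)$, and read off analyticity from the fact that the radicands are positive throughout $I$. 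Your route is more elementary and self-contained---it replaces a perturbation-theoretic black box by an explicit formula available in two dimensions---and it makes the endpoint degeneration ($R_{22}\to 0$ as $a\to 1^{+}$) transparent; the paper's argument has the advantage of being independent of the dimension and of the specific form of $J$. One small correction to your Step~1: the endpoint set $Z$ is the zero set of $\det J(\omega)=4\cos^{2}(\omega/2)-1$, i.e.\ $\cos(\omega/2)=\pm\tfrac12$, which on $[0,2\pi)$ gives $\{2\pi/3,\,4\pi/3\}$; your characterization $Z=\{\omega:\cos(\omega/2)=\tfrac12\}$ picks up only $2\pi/3$, since at $\omega=4\pi/3$ it is the \emph{other} eigenvalue, $2\cos(\omega/2)+1$, that vanishes. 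This does not affect the conclusion, but the two characterizations you give of $Z$ are not consistent with each other as written.
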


\begin{proof}
$J(\omega)$ has real-analytic entries in $\omega$, hence $\det J(\omega)$ is a real-analytic $2\pi$–periodic function; its zero set $Z$ is finite unless $\det J\equiv 0$ (which is not the case here). Thus $\Omega_+$ is a finite union of open intervals with endpoints in $Z$. On a compact subinterval of $I$, the spectrum of $J(\omega)$ is uniformly bounded away from $0$, so the Cholesky factor $R(\omega)$ may be chosen to be real-analytic, cf. \cite[Chapter~II, \S 6.1--6.4]{Kato1995}. Conjugation by such a matrix $R(\omega)$ preserves analyticity, and $J$--unitarity of $\beta(\cdot)$ implies $M_{\mathrm{pre/post}}(\omega)\in U(2)$.
\end{proof}

\begin{proposition}[Smoothness of the whole device]\label{prop:T-smooth}
On every positivity interval $I\subset\Omega_+$ from Lemma~\ref{lem:squier}, the map $\omega\mapsto T(\omega)$ is real-analytic as a $U(2)$–valued function. In particular, all matrix entries of $T(\omega)$, its eigenvalues and eigenphases, spectral projectors, and any polynomial or rational functions thereof are real-analytic in $\omega$.
\end{proposition}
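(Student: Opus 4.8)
The plan is to reduce the claim to the elementary fact that finite products of real-analytic matrix-valued functions are again real-analytic, together with the one-parameter analytic perturbation theory already invoked in Lemma~\ref{lem:squier}. Since $T(\omega)$ acts on $C\otimes V\cong\mathbb{C}^4$, it suffices to exhibit each of its three tensor factors as a real-analytic, unitary-valued function on a fixed positivity component $I\subset\Omega_+$; the product is then automatically real-analytic (each entry of a matrix product is a finite sum of products of entries, and these operations preserve real-analyticity) and unitary (a product of unitaries is unitary), so $T(\omega)\in U(4)$ is real-analytic on $I$.

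First I would dispose of the outer factors. By Lemma~\ref{lem:squier}, $M_{\mathrm{pre}}(\omega)$ and $M_{\mathrm{post}}(\omega)$ are $U(2)$-valued and real-analytic on $I$. Tensoring with the fixed identity $I_2$ acts linearly on entries, so $M_{\mathrm{pre}}(\omega)\otimes I_2$ and $M_{\mathrm{post}}(\omega)\otimes I_2$ are real-analytic, $U(4)$-valued functions of $\omega$. For the middle factor I note that $S(\theta)$ is block-diagonal in the control basis, with blocks $BA\in U(2)$ and $e^{i\theta}AB\in U(2)$; it is therefore unitary for every real $\theta$, and its only $\omega$-dependence enters through the scalar $e^{i\theta(\omega)}$. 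Hence $S(\theta(\omega))$ is real-analytic precisely when the phase map $\theta(\omega)$ is real-analytic, which I take as the standing hypothesis implicit in the statement (any admissible choice, e.g.\ $\theta(\omega)=\omega$ or a constant, meets it).

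The substantive part is the assertion about the eigendata. Real-analyticity of the entries of $T(\omega)$ does not by itself yield analytic eigenvalues and spectral projectors, because roots of the characteristic polynomial may in general develop branch points at parameter values where eigenvalues collide. The key point I would use is that $T(\omega)$ is unitary, hence normal, and depends real-analytically on the \emph{single} real parameter $\omega$. In this situation the Rellich--Kato perturbation theory for a one-parameter analytic family---the same source \cite[Chapter~II, \S 6]{Kato1995} cited in Lemma~\ref{lem:squier}---guarantees a global real-analytic selection of the eigenvalues and of the associated spectral (Riesz) projectors on $I$, with no loss of analyticity across crossings. Concretely, one may reduce to the self-adjoint case by writing $T(\omega)=e^{iH(\omega)}$ locally with $H(\omega)$ self-adjoint and real-analytic, or apply the normal-operator version directly; the eigenvalues then have unit modulus and may be written $e^{i\phi_j(\omega)}$ with real-analytic eigenphases $\phi_j$, locally after fixing a branch. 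I expect this analytic selection at eigenvalue crossings to be the main obstacle, and it is exactly where the normality of $T(\omega)$ is indispensable: for a non-normal analytic family the conclusion can fail.

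Finally, the closing assertion is immediate once analyticity of entries, eigenvalues, eigenphases, and projector entries is in hand: any polynomial in these quantities is real-analytic, and any rational function of them is real-analytic on the open subset of $I$ where its denominator is nonzero, since sums, products, and reciprocals away from zeros of real-analytic functions are real-analytic.
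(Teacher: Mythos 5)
Your proof is correct, and its first half coincides with the paper's own argument: the paper likewise decomposes $T(\omega)$ into the three factors $M_{\rm post}\otimes I$, $S(\theta(\omega))$, $M_{\rm pre}\otimes I$, cites Lemma~\ref{lem:squier} for the outer two, notes that $S(\theta(\omega))$ is real-analytic whenever $\theta$ is, and concludes via closure of real-analyticity under matrix products. Where you go beyond the paper is the ``in particular'' clause: the paper's proof stops at analyticity of the entries of $T(\omega)$ and tacitly treats the analyticity of eigenvalues, eigenphases, and spectral projectors as a consequence, whereas you correctly point out that this does \emph{not} follow from entrywise analyticity alone (eigenvalue collisions can produce branch points) and supply the missing ingredient --- Rellich--Kato analytic perturbation theory for a one-parameter family of normal operators, which is exactly the reference \cite[Chapter~II, \S 6]{Kato1995} the paper already uses in Lemma~\ref{lem:squier} for the Cholesky factor. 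Your observation that normality (here, unitarity) of $T(\omega)$ and the restriction to a \emph{single} real parameter are both essential is accurate; for non-normal families or multi-parameter families the analytic selection of eigenprojections can fail. In short, your argument proves strictly more than the paper's written proof does, and fills a genuine (if minor) gap in the ``in particular'' sentence. One small caveat: the global analytic logarithm $T(\omega)=e^{iH(\omega)}$ is itself most cleanly obtained \emph{after} Rellich's theorem, so it is cleaner to lean on the normal-operator version directly, as you also suggest.
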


\begin{proof}
By Lemma~\ref{lem:squier}, $M_{\mathrm{pre/post}}(\omega)$ are real-analytic on $I$. The map $\omega\mapsto e^{i\theta(\omega)}$ is smooth and real-analytic, if $\theta$ is. Hence, so is $S(\theta(\omega))$. Products of real-analytic matrix-valued functions are real-analytic.
\end{proof}

\section{Helstrom's formula and convexity witness}\label{sec:witness}

In order to determine if more than just interference of two causal orders takes place in the test device, we use single-shot binary discrimination with equal priors.

\paragraph*{Helstrom for unitaries.}
Let $U_0,U_1$ be two unitaries. The optimal single-shot success is
\begin{equation}\label{eq:helstrom}
p^*(U_0,U_1)\ =\ \frac12\Big(1+\sin\! \tfrac{\delta}{2}\Big),
\end{equation}
where $\delta\in[0,\pi]$ is the shortest eigenphase arc (spectral spread) of $U_0^\dagger U_1$. Equation \eqref{eq:helstrom} follows from the Helstrom bound \cite{Helstrom1976} specialized to two unitaries $U_0$, $U_1$ with equal priors. 

\begin{lemma}[Helstrom formula for unitaries]\label{lem:helstrom-unitary}
Let $U_0,U_1\in U(2)$ and define $V=U_0^\dagger U_1$.
Let $\delta\in[0,\pi]$ be the length of the smallest closed arc on the unit circle containing the spectrum of $V$.
Then the optimal single-shot, equal-prior success probability for distinguishing $U_0$ and $U_1$ is
\[
p^*(U_0,U_1)=\tfrac12\Big(1+\sin\tfrac{\delta}{2}\Big).
\]
\end{lemma}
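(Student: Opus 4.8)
The plan is to reduce the unitary-discrimination problem to pure-state discrimination and then to a geometric optimization over the numerical range of $V$. First I would fix an input strategy: feeding a probe state $\ket{\psi}$ into the unknown channel produces one of the two output vectors $\ket{\phi_0}=U_0\ket{\psi}$ or $\ket{\phi_1}=U_1\ket{\psi}$, so a single-shot test is exactly a binary discrimination between the pure states $\ket{\phi_0},\ket{\phi_1}$. The pure-state Helstrom bound with equal priors gives the optimal success probability
\[
p^*=\tfrac12\Big(1+\sqrt{1-|\braket{\phi_0|\phi_1}|^2}\Big),
\]
and since $\braket{\phi_0|\phi_1}=\bra{\psi}U_0^\dagger U_1\ket{\psi}=\bra{\psi}V\ket{\psi}$, maximizing $p^*$ is equivalent to minimizing the modulus $|\bra{\psi}V\ket{\psi}|$ over admissible probes.

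Next I would identify $\min_{\ket{\psi}}|\bra{\psi}V\ket{\psi}|$ as a numerical-range quantity. Because $V\in U(2)$ is normal, the Toeplitz--Hausdorff theorem together with the spectral theorem gives that its numerical range $W(V)=\{\bra{\psi}V\ket{\psi}:\|\psi\|=1\}$ is precisely the closed line segment (chord) joining the two eigenvalues $e^{i\alpha_1},e^{i\alpha_2}$ of $V$ on the unit circle. The quantity to minimize is therefore the Euclidean distance from the origin to this chord. To confirm that ancilla-assisted probes cannot do better, I would note that $\bra{\psi}(V\otimes I)\ket{\psi}=\operatorname{Tr}(V\rho)$ for the reduced state $\rho$, so the accessible values still lie in the convex hull of $W(V)$, which equals $W(V)$ itself; entanglement with an ancilla thus leaves the attainable minimum unchanged.

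Then I would carry out the elementary geometry. Writing the eigenphases symmetrically as $\alpha_{1,2}=\bar\alpha\pm\Delta/2$, where $\Delta\in[0,\pi]$ is the length of the shortest arc containing both eigenvalues, the chord subtends central half-angle $\Delta/2$ and the foot of the perpendicular from the origin lands on the segment at distance $\cos(\Delta/2)$. Hence $\min_{\ket{\psi}}|\bra{\psi}V\ket{\psi}|=\cos(\Delta/2)$, attained by the equal-weight superposition of the two eigenvectors of $V$. Substituting into the pure-state bound and using $\sqrt{1-\cos^2(\Delta/2)}=\sin(\Delta/2)\ge 0$ for $\Delta/2\in[0,\pi/2]$ yields
\[
p^*(U_0,U_1)=\tfrac12\big(1+\sin(\Delta/2)\big),
\]
as claimed. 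The degenerate cases check out: $\Delta=0$ (so $V$ scalar, and $U_0,U_1$ equal up to phase) gives $p^*=\tfrac12$, while $\Delta=\pi$ (antipodal eigenvalues, chord through the origin) gives $p^*=1$.

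I expect the main obstacle to be the optimality claim rather than the geometry: establishing that no more elaborate single-shot strategy---entangled probe plus arbitrary POVM---beats the numerical-range minimum. The clean resolution is the identity $W(V\otimes I)=W(V)$ together with convexity of $W(V)$, which shows that the reduced-state expectation $\operatorname{Tr}(V\rho)$ ranges over exactly the same chord; once this is in hand, tightness of the pure-state Helstrom bound delivers the formula.
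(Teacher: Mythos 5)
Your proposal is correct and follows essentially the same route as the paper's proof: reduce to minimizing $\lvert\bra{\psi}V\ket{\psi}\rvert$ over probe states, identify this with the distance from the origin to the numerical range of $V$ (the chord joining the two eigenvalues), and read off $\cos(\Delta/2)$ from the geometry. Your explicit check that ancilla-assisted probes cannot help, via $\bra{\psi}(V\otimes I)\ket{\psi}=\operatorname{Tr}(V\rho)$ and convexity of $W(V)$, is a worthwhile addition that the paper leaves implicit.
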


\begin{proof}
Since the trace norm is invariant under unitary action, distinguishing $U_0$ from $U_1$ is equivalent to distinguishing $I$ from $V=U_0^\dagger U_1$ on some pure state~$\ket{\psi}$.
For equal priors, Helstrom’s theorem gives
\[
p^*(U_0,U_1)=\tfrac12\!\left(1+\tfrac12\max_{\ket{\psi}}
\big\|\ketbra{\psi}{\psi}-V\ketbra{\psi}{\psi}V^\dagger\big\|_1\right).
\]
For a rank-1 projector, the trace norm is computable in closed form:
\[
\bigl\|\,|\psi\rangle\!\langle\psi|
   - V|\psi\rangle\!\langle\psi|V^{\dagger}\,\bigr\|_1
   = 2\sqrt{\,1 - \big|\langle\psi|V|\psi\rangle\big|^2\,}.
\]
Hence
\[
p^*(U_0,U_1)
   = \tfrac{1}{2}\!\left(1 + \sqrt{\,1 - \rho(V)^2\,}\right),
   \qquad
   \rho(V) := \min_{|\psi\rangle}\! \left| \langle \psi | V | \psi \rangle \right|.
\]

Because $V$ is unitarily diagonalisable, its \emph{numerical range} $W(V)=\{\langle \psi | V | \psi \rangle:\|\psi\|=1\}$ is the convex hull
of its eigenvalues: all $\{e^{i\theta_k}\}$ are situated on the unit circle.
The minimal modulus $\rho(V)$ is therefore the Euclidean distance from the origin to $W(V)$.

If the eigenvalues of $V$ occupy a shortest covering arc of angular width $\delta<\pi$,
then $W(V)$ is a circular arc’s convex hull, and the closest point to the origin is
the midpoint of the chord joining $e^{\pm i\delta/2}$, giving
\[
\rho(V)=\cos(\delta/2).
\]
If $\delta\ge\pi$, the convex hull contains the origin, so $\rho(V)=0$.
In either case,
\[
\sqrt{1-\rho(V)^2}=\sin(\delta/2),
\]
with the convention $\sin(\delta/2)=1$ for $\delta\ge\pi$.

Substituting back into the Helstrom expression yields
\[
p^*(U_0,U_1)=\tfrac12\big(1+\sin \tfrac{\delta}{2}\big),
\]
as claimed.
\end{proof}

\paragraph*{Witness definition.}
Define the fixed-order ceiling
\begin{equation}
p_{\rm fixed}^*\ :=\ \max\{\,p^*(I,AB),\ p^*(I,BA)\,\}.
\end{equation}
Define the switch and test performances
\begin{equation}
p_{\rm switch }(\omega)\ :=\ p^*(I\otimes I,\ S(\theta(\omega))),\qquad
p_{\rm test}(\omega)\ :=\ p^*(I\otimes I,\ T(\omega)).
\end{equation}
The corresponding witness gaps are
\begin{equation}
\Delta_{\rm sw}(\omega)\ :=\ p_{\rm switch}(\omega)-p_{\rm fixed}^*,\qquad
\Delta_{\rm test}(\omega)\ :=\ p_{\rm test}(\omega)-p_{\rm fixed}^*.
\end{equation}
We will refer to $\Delta_{\rm sw}$ and $\Delta_{\rm test}$ as the switch and test witness gaps, respectively.

\begin{proposition}[Convexity witness]\label{prop:convex}
Let $\mathcal{F}$ be the set of fixed-order processes. For equal-prior discrimination,
\[
W\in{\rm conv}(\mathcal{F})\Rightarrow p^*(W)\le \sup_{F\in\mathcal{F}}p^*(F)=p_{\rm fixed}^*.
\]
In particular, $\Delta_{\rm sw}(\omega)>0$ implies $S(\theta(\omega))\notin{\rm conv}(\mathcal{F})$, and
$\Delta_{\rm test}(\omega)>0$ implies $T(\omega)\notin{\rm conv}(\mathcal{F})$.
\end{proposition}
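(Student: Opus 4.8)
The engine of the proof is the convexity of the optimal single-shot success probability $p^*$ as a functional on the space of processes; both assertions then follow formally. The one genuine subtlety is to identify the right linear structure. Since $W\mapsto W\rho W^{\dagger}$ is quadratic in the unitary $W$, convexity is not visible at the level of $U(2)$ (or $U(4)$) matrices. Instead I would regard every process through its induced channel $\mathcal{W}(\rho)=W\rho W^{\dagger}$ (equivalently its Choi matrix), view all fixed-order processes and the test device $T(\omega)$ as channels on the common space $C\otimes V$, and take the convex hull ${\rm conv}(\mathcal{F})$ in this linear channel space, where a convex combination of unitary channels is the corresponding mixed-unitary channel.

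First I would fix an arbitrary discrimination strategy -- an input state $\ket{\psi}$ (possibly entangled with an ancilla) together with a two-outcome POVM $\{E_0,E_1\}$ on the output -- and write the equal-prior success probability for distinguishing the reference (identity) channel from $\mathcal{W}$ as
\[
p_{E,\psi}(\mathcal{W})=\tfrac12\,\mathrm{Tr}\!\big(E_0\,\ketbra{\psi}{\psi}\big)+\tfrac12\,\mathrm{Tr}\!\big(E_1\,(\mathcal{W}\otimes\mathrm{id})\ketbra{\psi}{\psi}\big),
\]
the first output state being $\ketbra{\psi}{\psi}$ because the reference acts trivially. Only the second term carries $\mathcal{W}$, and since $(\mathcal{W}\otimes\mathrm{id})\ketbra{\psi}{\psi}$ depends linearly on the superoperator $\mathcal{W}$ while the Born traces are linear in the output state, $p_{E,\psi}$ is an affine function of $\mathcal{W}$. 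The key step is then to note that
\[
p^*(\mathcal{W})=\sup_{E,\psi}\,p_{E,\psi}(\mathcal{W})
\]
is a pointwise supremum of affine functions, hence convex on the channel space.

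Granting convexity, the first inequality is immediate. For $W\in{\rm conv}(\mathcal{F})$ write, at the channel level, $W=\sum_i\lambda_i F_i$ with $\lambda_i\ge0$, $\sum_i\lambda_i=1$ and $F_i\in\mathcal{F}$; convexity yields
\[
p^*(W)\le\sum_i\lambda_i\,p^*(F_i)\le\max_i p^*(F_i)\le\sup_{F\in\mathcal{F}}p^*(F)=p_{\rm fixed}^*,
\]
where the final equality is the definition of the fixed-order ceiling, evaluated through the Helstrom formula of Lemma~\ref{lem:helstrom-unitary}. The concluding clause is then purely the contrapositive: if $\Delta(\omega)>0$, that is $p^*(T(\omega))=p_{\rm test}(\omega)>p_{\rm fixed}^*$, then $T(\omega)$ violates the displayed bound and therefore cannot lie in ${\rm conv}(\mathcal{F})$.

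I expect the principal obstacle to be conceptual rather than computational: one must take the convex hull in the linear space of channels (or Choi matrices) rather than in the unitary group, so that each fixed strategy contributes an affine, and hence the optimum a convex, functional. Once this linearization is in place, the ``supremum of affine is convex'' argument together with the elementary bound $\sum_i\lambda_i a_i\le\max_i a_i$ completes the proof with no further estimates.
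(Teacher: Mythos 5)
Your proof is correct and follows essentially the same route as the paper: both arguments identify the success probability of a fixed strategy as an affine (the paper says linear, via the generalized Born rule $\langle G,W\rangle$) functional of the process in a suitable linear space, conclude that $p^*$ is convex as a pointwise supremum of such functionals, and then apply the elementary bound for convex mixtures. Your explicit remark that the convex hull must be taken at the level of channels/Choi matrices rather than in the unitary group is a point the paper leaves implicit, but it does not change the argument.
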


\begin{proof}
For any discrimination strategy—specified by an input state and measurement—the success probability is given by the generalized Born rule 
\(
p(W)=\langle G,W\rangle
\)
for some Hermitian operator $G$ determined by the strategy.  
The optimal success probability
\[
p^*(W)=\sup_G\,\langle G,W\rangle
\]
is therefore the pointwise supremum of linear functionals of $W$, and hence convex.  
For a convex mixture $W=\sum_i q_iF_i$ with $F_i\in\mathcal{F}$ and $q_i\ge0$, $\sum_i q_i=1$, one has
\[
p^*(W)\le\sum_i q_i\,p^*(F_i)\le p_{\rm fixed}^*.
\]
Thus a positive witness gap $\Delta_{\rm test}(\omega)>0$ implies that $T(\omega)$ cannot belong to $\mathrm{conv}(\mathcal{F})$.
\end{proof}

\paragraph*{Interference contrast.}
To isolate the \emph{net} effect of the non-Abelian mixers beyond the bare switch, we define
\begin{equation}\label{eq:delta-int}
\Delta_{\rm int}(\omega)\;:=\;\Delta_{\rm test}(\omega)-\Delta_{\rm sw}(\omega)
\;=\;p_{\rm test}(\omega)-p_{\rm switch}(\omega).
\end{equation}
Thus $\Delta_{\rm int}(\omega)>0$ indicates constructive non-Abelian enhancement relative to the bare switch,
while $\Delta_{\rm int}(\omega)<0$ indicates destructive non-Abelian suppression  relative to the bare switch.

\section{Main Results}\label{sec:main}

We evaluate the Squier representation $\beta_i(s)$ of $B_3$, as discussed above, for $s=e^{i\omega/2}$ and verify $J(\omega)$–unitarity, positive-definiteness of $J(\omega)$, and Euclidean unitarity of the conjugated matrices $U(\omega)=R(\omega)\beta(w)R(\omega)^{-1}$ as defined in~\eqref{eq:eucl-unitary}. 

The control braid word is 
\begin{equation}
    w=\sigma_1^3,
\end{equation}
although other $B_3$ elements will result in similar outcomes and can be implemented by using the auxiliary code available on GitHub \cite{github-burau-switch}. 

\paragraph{Single-qubit rotations.}
Let us use the conventional $\mathrm{SU}(2)$ rotation matrices about the $x$--, $y$--, and $z$--axes,
\begin{align}
R_x(\theta)
  &= \exp\!\left(-\tfrac{i}{2}\theta\,\sigma_x\right)
   \;=\;
   \begin{pmatrix}
     \cos\tfrac{\theta}{2} & -i\sin\tfrac{\theta}{2} \\[4pt]
     -i\sin\tfrac{\theta}{2} & \cos\tfrac{\theta}{2}
   \end{pmatrix}, \\[6pt]
R_y(\theta)
  &= \exp\!\left(-\tfrac{i}{2}\theta\,\sigma_y\right)
  =
  \begin{pmatrix}
     \cos\tfrac{\theta}{2} & -\sin\tfrac{\theta}{2} \\[4pt]
     \sin\tfrac{\theta}{2} &  \cos\tfrac{\theta}{2}
  \end{pmatrix}, \\[6pt]
R_z(\phi)
  &= \exp\!\left(-\tfrac{i}{2}\phi\,\sigma_z\right)
   \;=\;
   \begin{pmatrix}
     e^{-i\phi/2} & 0 \\[4pt]
     0 & e^{i\phi/2}
   \end{pmatrix},
\end{align}
where $\sigma_x$, $\sigma_y$ and $\sigma_z$ are the Pauli matrices.

\paragraph*{Das Gedankenexperiment.}
Let us set the target unitaries as 
\begin{equation}
    A = R_x(1.5), \qquad B = R_z(0.75),
\end{equation}
so that $[A,B]\neq 0$ while both remain simple $SU(2)$ rotations.  

\medskip

The switch phase is fixed as 
\begin{equation}
    \theta(\omega)=\omega,
\end{equation}
while other functions can also be implemented in \cite{github-burau-switch}. 

\medskip

We restrict our consideration to one connected component of the Squier positivity window $\Omega_+$, as it is enough to demonstare the desired properties of the test device $T(\omega)$.

For each $\omega\in(0, 2\pi/3)$, we compute:
\begin{enumerate}
    \item[1.] $J$–unitarity errors $\|\beta_i^\dagger J \beta_i - J\|$ for $i=1,2$;  
    \item[2.] Euclidean unitarity error $\|U^\dagger U-I\|$ of the unitarized braid word; 
    \item[3.] the single-shot Helstrom success probabilities $p_{\rm switch}(\omega)$ and $p_{\rm test}(\omega)$;
    \item[4.] the fixed-order ceiling $p_{\rm fixed}^*=\max\{p^*(I,AB),p^*(I,BA)\}$.
\end{enumerate}

The code performing these computations, together with reproducible figures, is publicly available on GitHub~\cite{github-burau-switch}.

\begin{theorem}\label{thm:non-abelian}
For the switch device $S(\theta(\omega))$ with $\theta(\omega)=\omega$ and target unitaries
$A$, $B$ defined above, the witness gap $\Delta_{\rm sw}(\omega)$ is strictly positive for all
$\omega\in\Omega_+$.
For the full test device $T(\omega)$ with nontrivial mixers $M_{\rm pre/post}(\omega)$, the interference contrast
$\Delta_{\rm int}(\omega)=\Delta_{\rm test}(\omega)-\Delta_{\rm sw}(\omega)$
takes both positive and negative values as a function of $\omega$.
\end{theorem}

Indeed, the two-dimensional Squier form $J(\omega)$ on $B_3$ is given by \eqref{eq:J-omega} and is positive definite precisely on $\Omega_+$ given by \eqref{eq:pd-window}.

\begin{figure}[h!]
\centering
\includegraphics[width=0.99\textwidth]{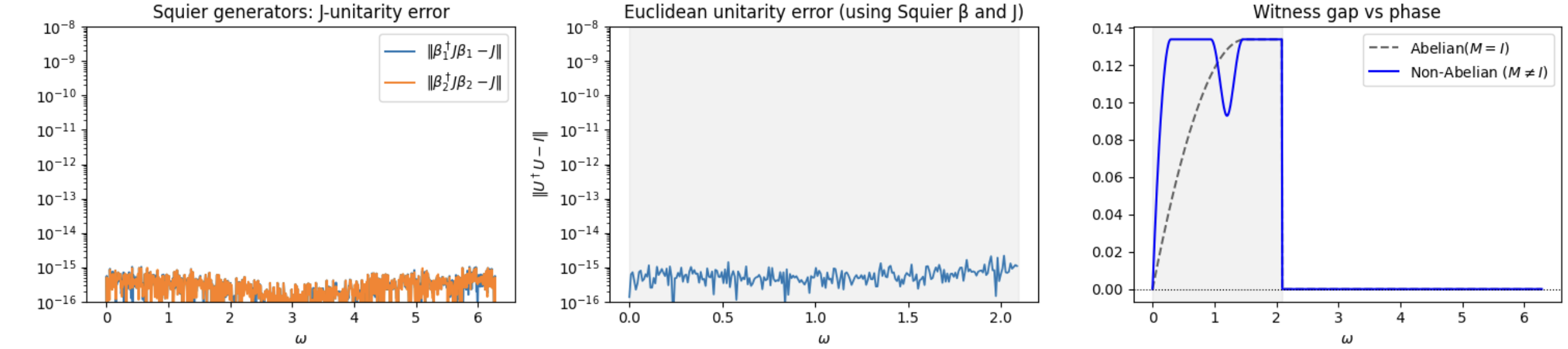}
\caption{
Left: Verification of Squier’s \( J \)-unitarity for \( \beta_1, \beta_2 \).\\
Center: Euclidean metric error of \( U(\omega) \) within $\omega\in(0, 2\pi/3)$.\\
Right: Witness gaps $\Delta_{\rm sw}(\omega)$ (bare switch, Abelian control) and $\Delta_{\rm test}(\omega)$ (with non-Abelian mixers)
in the region $\omega\in(0, 2\pi/3)$. Dips of $\Delta_{\rm test}$ below $\Delta_{\rm sw}$ correspond to destructive non-Abelian interference,
i.e.\ $\Delta_{\rm int}(\omega)<0$.}\label{fig:witness-gap}
\end{figure}

Within the window $\omega\in(0, 2\pi/3)$, the Squier generators satisfy $J$-unitarity to numerical precision better than $10^{-14}$, and the conjugated word $U(\omega)$ is Euclidean-unitary with error $\|U^\dagger U-I\|\!<\!10^{-14}$. 

The Helstrom success for the switch exceeds the fixed-order ceiling by a gap of $\Delta_{\max}\approx 0.1338$, confirming causal non-separability. An analogous statement holds for the test device. 

However, in the non-Abelian test device, $\Delta_{\rm test}(\omega)$ can exhibit pronounced
\emph{dips} relative to the bare switch gap $\Delta_{\rm sw}(\omega)$. Equivalently, the interference contrast
$\Delta_{\rm int}(\omega)=\Delta_{\rm test}(\omega)-\Delta_{\rm sw}(\omega)$ becomes negative on portions of $\Omega_+$,
signaling destructive non-Abelian interference even while $\Delta_{\rm test}(\omega)$ may remain nonnegative.

\paragraph{Interpretation.}
The control mixer $M(\omega)\neq I$ embodies the minimal non-Abelian structure of $B_3$: two non-commuting
generators acting, after unitarization on $\Omega_+$, on a two-dimensional control space.
The existence of a \emph{positive} switch gap $\Delta_{\rm sw}(\omega)>0$ demonstrates that the corresponding process
cannot be expressed as a convex mixture of fixed-order operations.

The existence of both constructive and destructive regimes for the \emph{interference contrast}
$\Delta_{\rm int}(\omega)=\Delta_{\rm test}(\omega)-\Delta_{\rm sw}(\omega)$ draws a sharp distinction between
Abelian (phase-only) and non-Abelian (matrix-valued) order control: depending on $\omega$, the non-Abelian mixers can rotate
the two order components into control subspaces that are either more distinguishable (constructive regime) or less
distinguishable (destructive regime) for the optimal Helstrom measurement. This is why $\Delta_{\rm int}(\omega)$ can change sign, thus demonstrating enhancement vs.\ suppression relative to the bare switch. 

The non-commutativity of the braid generators is reflected in the non-commutativity of their unitarized specializations on
$\Omega_+$, which is the defining feature of non-Abelian anyons. In the language of anyon theory, this corresponds to an effective superposition of exchange orders producing a \emph{matrix} holonomy rather than a scalar phase.

In a hypothetical realization, each braid word $w$ would correspond to an exchange of three distinguishable excitations,
and $M(\omega)$ would represent the induced two-dimensional fusion subspace.

Hence, the present construction provides a purely algebraic \emph{Gedankenexperiment} demonstrating how non-Abelian exchange
statistics could manifest as measurable order-sensitivity, independent of microscopic substrate.

\section*{Acknowledgments}
A.K. was supported by the Wolfram Institute for Computational Foundations of Science, and by the John Templeton Foundation. 

\printbibliography

\end{document}